\theoremstyle{plain}
\newtheorem{assumption}{2SLS}
\newtheorem{proposition}{Proposition}
\def\AmSTeX{$\cal A$\kern-.1667em\lower.5ex\hbox{$\cal M$}\kern-.125em
            $\cal S$-\TeX}
\begin{document}

\title{On the Effect of Imputation on the 2SLS Variance}
\author{Helmut Farbmacher$^{a,b,}$\thanks{%
farbmacher@econ.lmu.de}\\ $^{a}${\small Department of Economics,} {\small University of Munich, Germany}
\\ $^{b}${\small Munich Center for the Economics of Aging,} {\small Max Planck Society, Germany} \and Alexander Kann$^{c}$\\$^{c}${\small Department of Economics, University of Mannheim, Germany}}
\date{\today}
\maketitle

\begin{abstract}
\noindent
Endogeneity and missing data are common issues in empirical research. We investigate how both jointly affect inference on causal parameters. Conventional methods to estimate the variance, which treat the imputed data as if it was observed in the first place, are not reliable. We derive the asymptotic variance and propose a heteroskedasticity robust variance estimator for two-stage least squares which accounts for the imputation. Monte Carlo simulations support our theoretical findings.

\medskip

\noindent%
\textbf{Key Words:} endogeneity, instrumental variables, imputation, variance estimation
\end{abstract}
\onehalfspacing
\baselineskip=19pt



\section{Introduction}

Missing data are frequently encountered in empirical studies in economics and social sciences. A popular method to handle missing data is the complete case approach, which excludes incomplete observations from the analysis. Among others, an alternative approach is regression imputation, which utilizes the complete observations to fill in the missing values. The imputed data is then used as if it was observed in the first place. 

While the imputation of an exogenous regressor is a well-researched topic \citep[e.g.,][]{little1992missingx}, little is known about the imputation of an endogenous regressor. Two-stage least squares (2SLS) estimation is a way to deal with the endogeneity of a regressor. \cite{MCDONOUGH2017} discussed the bias of 2SLS with imputation. Their analysis was mainly based on \cite{NAGAR1959}'s seminal work about the finite-sample bias of 2SLS. However, they did not discuss variance estimation, which is challenging as we show in this study. 2SLS inference is affected by the imputation implying that the conventional variance estimator, which ignores the imputation, is only valid if we are interested to test whether the parameter of the endogenous regressors is zero. Standard errors and confidence intervals based on the conventional variance are invalid. We obtain the asymptotic distribution and derive a heteroskedasticity robust variance estimator for 2SLS with regression imputation, which allows to construct valid standard errors, confidence intervals and conduct tests.

We focus on settings where the endogenous regressor is missing at random, the number of instruments is fixed as the sample size grows and the imputation method is regression imputation. The missing at random setting is essentially selction based on observables \citep{wooldridge2007} and covers many relevant applications \citep[for instance,][for applications with missing exogenous variables]{wooldridge2007,graham2012,chaudhuri2018}. To simplify the comparison with the conventional variance estimation, we assume missing completely at random in parts of the study.

We illustrate our theoretical results using Monte Carlo simulations. In the simulation results we focus on the key parameters in the interplay between 2SLS estimation and regression imputation. These key parameters are the fraction of missings, which can be observed easily, and the direction of the OLS bias, which is unobserved but very often researchers have strong prior beliefs about it. 

In the next section we describe the setup including model, missing data structure and a brief description of regression imputation. In Section \ref{chap:inference}, we derive the asymptotic variance and propose a variance estimator which accounts for the imputation and is robust to general forms of heteroskedasticity. Section \ref{sec:MCsim} shows Monte Carlo simulation results and Section \ref{sec:conclusion} concludes.


\section{Setup} \label{sec:setup}

\subsection{Model} \label{sec:2sls}

Consider the standard simultaneous equation model
\begin{align} \label{eq:molde}
	\begin{split}
		y_i&=x_i\beta+u_i\\
		x_i&=Z_{i.}'\pi+v_i \, ,
	\end{split}
\end{align}
with dependent variable $(y_i)$, endogenous regressor $(x_i)$ and a set of $L$ instruments $(Z_{i.})$.\footnote{To simplify notation, we abstract from exogenous control variables.} The parameter of interest is $\beta$. The two-stage least squares (2SLS) estimator is a way to deal with the endogeneity of $x$:
\begin{align} 
	\begin{split}
		\widehat{\beta}&=(x'P_Zx)^{-1}x'P_Zy \, ,
	\end{split}					
\end{align}
where $P_Z=Z(Z'Z)^{-1}Z'$. Throughout the paper we assume that 2SLS\ref{ass:2sls1}-2SLS\ref{ass:2sls4} are fulfilled in model (\ref{eq:molde}), which assures consistency and valid inference of the 2SLS estimator. 
\begin{assumption} \label{ass:2sls1}
	$plim\left(\frac{Z'u}{n}\right)=E[Z_{i.}u_i]=0 \, $; $plim\left(\frac{Z'v}{n}\right)=E[Z_{i.}v_i]=0 \,$.
\end{assumption}
\begin{assumption} \label{ass:2sls2}
	$plim\left(\frac{Z'x}{n}\right)=E[Z_{i.}x_i]=Q_{Zx}\neq0 \, $.
\end{assumption}
\begin{assumption} \label{ass:2sls3}
	$plim\left(\frac{Z'Z}{n}\right)=E[Z_{i.}Z_{i.}']=Q_{ZZ}$, with $Q_{ZZ}$ a finite and full rank matrix.
\end{assumption}
\begin{assumption} \label{ass:2sls4}
	Observations are identically and independently distributed.
\end{assumption}
\begin{assumption} \label{ass:2sls5}
	Errors are homoskedastatic $E[uu'|Z]=\sigma_{u}^2I_n \,$; $E[vv'|Z]=\sigma_{v}^2I_n \,$;  $E[vu'|Z]=\sigma_{uv}I_n \, $; $\sigma_{u}^2$, $\sigma_{v}^2$ and $\sigma_{uv}$ are finite.
\end{assumption}

\subsection{Missing data} \label{sec:missing_data}
The analysis is, however, complicated by the fact that data on the endogenous regressor, $x$, is missing for some observations causing them to be incomplete. Therefore, $\widehat\beta$ as defined above cannot be calculated with the data at hand. Let the subscripts indicate the missing status, that is, $y_0, x_0, Z_0$ indicate the complete observations and $y_1, x_1, Z_1$ the observations with missing value of $x$. Let $\widehat{p}$ be the probability of missings in the endogenous regressor and assume
\begin{assumption} \label{ass:2sls5b}
	$\widehat{p}=\frac{n_1}{n}\rightarrow{}p<1$ as $n\to\infty \, $,
\end{assumption}
\noindent
which implies that not only the number of incomplete observations ($n_1$) but also the number of complete observations ($n_0$) increase as $n$ increases. Without loss of generality we assume that the first $n_0$ observations of the matrix $Z$ and the vectors $x$ and $y$ are complete observation and the remaining $n_1$ observations are incomplete. 

The missing data literature distinguishes three types of missing structures: missing completely at random (MCAR), missing at random (MAR)\footnote{That is the missing structure only depends on observables.}, and not missing at random (NMAR). While MCAR is the easiest to deal with, real data is often NMAR or MAR. The missing structure is called ignorable if the data is either MCAR or MAR. Since dealing with NMAR is very different from the other two types, we focus on missing (completely) at random in this exposition, and use the following additional assumption:

\begin{assumption} \label{ass:2sls6}
	The missing structure is ignorable. 
\end{assumption}

\subsection{Regression imputation} \label{sec:RI}
Regression imputation (RI), which is a two-step procedure, can be applied as a tool to fill in the missing values. In the first step the complete observations are used to regress the endogenous variable $(x_0)$ on the imputation variables to obtain the imputation parameters. These parameters are equal to the first stage estimates from the complete case approach $\left(\widehat\pi_{CC}=(Z_0'Z_0)^{-1}Z_0'x_0\right)$ if the imputation method incorporates the instruments.\footnote{\cite{MCDONOUGH2017} show in Monte Carlo simulations that imputation methods, which incorporate the instruments, produce the smallest finite sample bias of 2SLS estimation. } In the second step these estimates are employed to impute the missing values in $x_1$ by multiplying the imputation variables for the incomplete observations with the parameters obtained from the first step. The imputed variable is then used in the 2SLS estimation as if it was observed in the first place.\footnote{Clearly, the observations with missing values cannot add any information to the estimation of the first stage parameters. Hence, the relevant $F$-statistic for the first stage parameters should be based on the complete case observations.} 

\noindent
The model in (\ref{eq:molde}) can be restated in terms of $\widetilde{x}$ by adding an imputation error:
\begin{align} \label{eq:imputed_model_general}
	\begin{split}
		y&=\widetilde{x}\beta+u-e\beta=\widetilde{x}\beta+\widetilde{u} \\
		\widetilde{x}&=Z\pi+v+e=Z\pi+\widetilde{v}  \\
		\widetilde{x}&=x+e=\begin{pmatrix}x_0\\x_1\end{pmatrix}+\begin{pmatrix}0\\e_1\end{pmatrix}=\begin{pmatrix}x_0\\\widetilde{x}_1\end{pmatrix} \, .
	\end{split} 
\end{align}
The imputation error $e_1$ and the composite errors of the imputed model ($\widetilde{v}$ \& $\widetilde{u}$) are defined as
\begin{align} \label{eq:imp_errors}
	\begin{split}
		e_1&=Z_1(\widehat{\pi}_{CC}-\pi)-v_1=Z_1(Z_0'Z_0)^{-1}Z_0'v_0-v_1\\
		\widetilde{v}&=\begin{pmatrix}v_0\\\widetilde{v}_1\end{pmatrix}=\begin{pmatrix}v_0\\v_1+e_1\end{pmatrix}=\begin{pmatrix}v_0\\Z_1(Z_0'Z_0)^{-1}Z_0'v_0\end{pmatrix}\\
		\widetilde{u}&=\begin{pmatrix}u_0\\\widetilde{u}_1\end{pmatrix}=\begin{pmatrix}u_0\\u_1-e_1\beta\end{pmatrix}=\begin{pmatrix}u_0\\u_1+v_1\beta-Z_1(Z_0'Z_0)^{-1}Z_0'v_0\beta\end{pmatrix} \, .
	\end{split}
\end{align}

\noindent
While the 2SLS estimator with regression imputation remains consistent if 2SLS \ref{ass:2sls1} to \ref{ass:2sls3} and \ref{ass:2sls5b} to \ref{ass:2sls6} are fulfilled, inference is affected by the imputation. The independence across observations is violated in the imputed model as the complete data has been used to impute incomplete observations. In the next section, we derive a variance estimator which takes this into account. 


\section{Estimation and Inference}  \label{chap:inference} 

We consider
\begin{align}
	\widehat\beta_{RI}=(\widetilde{x}'P_Z\widetilde{x})^{-1}\widetilde{x}'P_Z y \, ,
\end{align}
and derive its limiting distribution under heteroskedasticity in the next proposition.

\begin{proposition} \label{prop:asym_var_het}
	Under Assumptions 2SLS \ref{ass:2sls1}-\ref{ass:2sls4}, 2SLS \ref{ass:2sls5b}-\ref{ass:2sls6}, $E\left[y_i^4\right]$, $E\left[\widetilde{x}_i^4\right]$, $E\left[\Vert Z_{i.}\Vert^4\right]$ being finite and $n\rightarrow\infty$, we have that 
	\begin{align}
		\begin{split}
			\sqrt{n}(\widehat{\beta}_{RI}-\beta)\xrightarrow[]{d}N(0,V_{\widehat\beta_{RI}})
		\end{split}
	\end{align}
	with the asymptotic variance given by, 
	\begin{align*}
		V_{\widehat\beta_{RI}}=\left(Q_{xZ}Q_{ZZ}^{-1}Q_{Zx}\right)^{-1}Q_{xZ}Q_{ZZ}^{-1}\,\Omega{}\,Q_{ZZ}^{-1}Q_{Zx}\left(Q_{xZ}Q_{ZZ}^{-1}Q_{Zx}\right)^{-1}
	\end{align*}
	where,
	\begin{align*}
		\Omega=&E[u_{i}^2 Z_{i.}Z_{i.}']-2pE[u_{0i}v_{0i}Z_{0i.}Z_{0i.}']Q_{Z_0Z_0}^{-1}Q_{Z_1Z_1}\beta\\
		&+\frac{p^2}{1-p}Q_{Z_1Z_1}Q_{Z_0Z_0}^{-1}E[v_{0i}^{2}Z_{0i.}Z_{0i.}']Q_{Z_0Z_0}^{-1}Q_{Z_1Z_1}\beta^2\\
		&+2pE[u_{1i}v_{1i}Z_{1i.}Z_{1i.}']\beta+pE[v_{1i}^2 Z_{1i.}Z_{1i.}']\beta^2
	\end{align*}
\end{proposition}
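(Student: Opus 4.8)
The plan is the usual asymptotic-linearity argument for IV estimators, with the effect of the imputation isolated in one term of the score. Using $y=\widetilde x\beta+\widetilde u$, write
$$\sqrt n\,(\widehat\beta_{RI}-\beta)=\Big(\tfrac1n\widetilde x'P_Z\widetilde x\Big)^{-1}\tfrac1n\widetilde x'Z\Big(\tfrac1n Z'Z\Big)^{-1}\tfrac{1}{\sqrt n}Z'\widetilde u .$$
Since $\widetilde x=x+e$ with $e=(0',e_1')'$ and $\tfrac1n Z'e=\tfrac1n Z_1'Z_1(Z_0'Z_0)^{-1}Z_0'v_0-\tfrac1n Z_1'v_1$, the LLN together with 2SLS\ref{ass:2sls4}, 2SLS\ref{ass:2sls1} and ignorability 2SLS\ref{ass:2sls6} (which give $E[m_iZ_{i.}v_i]=E[(1-m_i)Z_{i.}v_i]=0$ for the missingness indicator $m_i$) and 2SLS\ref{ass:2sls5b} yields $\tfrac1n Z'e\xrightarrow{p}0$, hence $\tfrac1n Z'\widetilde x\xrightarrow{p}Q_{Zx}$, $\tfrac1n\widetilde x'Z\xrightarrow{p}Q_{xZ}$ and $\tfrac1n\widetilde x'P_Z\widetilde x\xrightarrow{p}Q_{xZ}Q_{ZZ}^{-1}Q_{Zx}$ by 2SLS\ref{ass:2sls2}-\ref{ass:2sls3}. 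By Slutsky's theorem the proposition therefore reduces to establishing $\tfrac{1}{\sqrt n}Z'\widetilde u\xrightarrow{d}N(0,\Omega)$.

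For that, substitute $\widetilde u_1=u_1+v_1\beta-Z_1(Z_0'Z_0)^{-1}Z_0'v_0\,\beta$ from (\ref{eq:imp_errors}) to get
$$\tfrac{1}{\sqrt n}Z'\widetilde u=\tfrac{1}{\sqrt n}Z'u+\beta\,\tfrac{1}{\sqrt n}Z_1'v_1-\beta\big(\tfrac1n Z_1'Z_1\big)\big(\tfrac1n Z_0'Z_0\big)^{-1}\tfrac{1}{\sqrt n}Z_0'v_0 .$$
In the third term $\big(\tfrac1n Z_1'Z_1\big)\big(\tfrac1n Z_0'Z_0\big)^{-1}\xrightarrow{p}\tfrac{p}{1-p}Q_{Z_1Z_1}Q_{Z_0Z_0}^{-1}=:M$ while $\tfrac{1}{\sqrt n}Z_0'v_0=O_p(1)$, so replacing the data-dependent factor by $M$ introduces only an $o_p(1)$ error. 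Writing the three pieces as a single average over observations gives
$$\tfrac{1}{\sqrt n}Z'\widetilde u=\tfrac{1}{\sqrt n}\sum_{i=1}^n\xi_i+o_p(1),\qquad \xi_i=Z_{i.}u_i+\beta m_iZ_{i.}v_i-\beta M(1-m_i)Z_{i.}v_i .$$
The $\xi_i$ are i.i.d.\ by 2SLS\ref{ass:2sls4}, have mean zero by 2SLS\ref{ass:2sls1} and 2SLS\ref{ass:2sls6}, and have finite second moment because $E[\Vert Z_{i.}\Vert^4]$, $E[y_i^4]$ and $E[\widetilde x_i^4]$ are finite (bounding the fourth moments of $u_i$, $v_i$ and, with Cauchy--Schwarz, the second moments of $u_iZ_{i.}$, $v_iZ_{i.}$), so the Lindeberg--L\'evy CLT gives $\tfrac{1}{\sqrt n}\sum_i\xi_i\xrightarrow{d}N(0,\Omega)$ with $\Omega=E[\xi_i\xi_i']$.

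It remains to expand $\Omega=E[\xi_i\xi_i']$. Writing $\xi_i=A_i+B_i-C_i$ with $A_i=Z_{i.}u_i$, $B_i=\beta m_iZ_{i.}v_i$, $C_i=\beta M(1-m_i)Z_{i.}v_i$, and using $m_i^2=m_i$, $(1-m_i)^2=1-m_i$, $m_i(1-m_i)=0$, $P(m_i=1)=p$ and conditioning on the missingness status, one obtains $E[A_iA_i']=E[u_i^2Z_{i.}Z_{i.}']$, $E[B_iB_i']=p\beta^2E[v_{1i}^2Z_{1i.}Z_{1i.}']$, $E[C_iC_i']=\tfrac{p^2}{1-p}\beta^2 Q_{Z_1Z_1}Q_{Z_0Z_0}^{-1}E[v_{0i}^2Z_{0i.}Z_{0i.}']Q_{Z_0Z_0}^{-1}Q_{Z_1Z_1}$, $E[A_iB_i']+E[B_iA_i']=2p\beta E[u_{1i}v_{1i}Z_{1i.}Z_{1i.}']$, $E[A_iC_i']+E[C_iA_i']=2p\beta E[u_{0i}v_{0i}Z_{0i.}Z_{0i.}']Q_{Z_0Z_0}^{-1}Q_{Z_1Z_1}$ (written in the compact, non-symmetrized form of the statement, which is immaterial since $\beta$ is a scalar and only the symmetric part of $\Omega$ enters the variance), and $E[B_iC_i']=E[C_iB_i']=0$. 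Combining these with signs $+,+,+,+,-$ reproduces $\Omega$; Slutsky's theorem applied to the first display then yields $\sqrt n\,(\widehat\beta_{RI}-\beta)\xrightarrow{d}N\!\big(0,(Q_{xZ}Q_{ZZ}^{-1}Q_{Zx})^{-1}Q_{xZ}Q_{ZZ}^{-1}\Omega\, Q_{ZZ}^{-1}Q_{Zx}(Q_{xZ}Q_{ZZ}^{-1}Q_{Zx})^{-1}\big)$, which is $V_{\widehat\beta_{RI}}$.

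The main obstacle is the second step: one must recognize that the imputation feeds the term $Z_1(Z_0'Z_0)^{-1}Z_0'v_0$ into $\widetilde u$, i.e.\ a $\sqrt n$-scaled average of the \emph{complete} observations' first-stage errors, which does not vanish asymptotically but contributes the extra $-\beta M(1-m_i)Z_{i.}v_i$ term to the per-observation score and links the incomplete observations to the complete ones --- exactly why cross-sectional independence fails and the conventional variance estimator is invalid. Handling it rigorously needs the $O_p(1)\cdot o_p(1)$ linearization above and careful attention to which expectations are taken conditional on the missingness status; the remaining steps (the denominator limits, the CLT, and the term-by-term bookkeeping) are routine.
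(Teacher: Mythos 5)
Your argument is correct and follows the same overall strategy as the paper: the same asymptotic-linearity reduction, the same limits for $\widetilde{x}'P_Z\widetilde{x}/n$, and a CLT for $Z'\widetilde{u}/\sqrt{n}$ based on the same substitution of $\widetilde{u}_1=u_1+v_1\beta-Z_1(Z_0'Z_0)^{-1}Z_0'v_0\beta$. The one genuine difference is how the CLT step is executed. The paper splits $Z'\widetilde{u}/\sqrt{n}$ into a complete-sample piece (containing $Z_0'u_0$ and the imputation term) and an incomplete-sample piece, applies a CLT to each, invokes asymptotic independence of the two pieces, and adds $\Omega_0+\Omega_1$, using $(1-p)E[Z_{0i.}Z_{0i.}'u_{0i}^2]+pE[Z_{1i.}Z_{1i.}'u_{1i}^2]=E[Z_{i.}Z_{i.}'u_{i}^2]$ to assemble $\Omega$. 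You instead encode missingness through the indicator $m_i$, first replace the data-dependent factor $\bigl(\tfrac1nZ_1'Z_1\bigr)\bigl(\tfrac1nZ_0'Z_0\bigr)^{-1}$ by its limit $M$ via the $O_p(1)\cdot o_p(1)$ argument, and then apply a single Lindeberg--L\'evy CLT to the i.i.d.\ scores $\xi_i$, computing $\Omega=E[\xi_i\xi_i']$ term by term with $m_i(1-m_i)=0$. Your route is somewhat tighter on one point the paper treats informally --- the paper's subsample CLT is applied to summands that still contain the sample-dependent matrix $\tfrac{Z_1'Z_1}{n_1}\bigl(\tfrac{Z_0'Z_0}{n_0}\bigr)^{-1}$, whereas your linearization removes it before invoking the CLT --- while the paper's split makes the asymptotic-independence structure (complete vs.\ imputed observations) and the decomposition $\Omega_0+\Omega_1$ explicit. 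Both proofs rely equally on ignorability to guarantee that the subsample (or indicator-weighted) scores have mean zero, and your remark that the non-symmetrized second term of $\Omega$ is immaterial inside the scalar sandwich is a fair reading of the paper's statement.
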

\begin{proof}
	See Appendix \ref{app:asym_var_het}. 
\end{proof}
The first term of $\Omega$ corresponds to the standard asymptotic variance of 2SLS without missing data. The remaining terms can be attributed to regression imputation. If no data is missing ($p=0$) or $\beta=0$, $\Omega$ collapses to the standard asymptotic variance of 2SLS. The latter simplification (i.e., $\beta=0$) is a well-known result in the literature about generated regressors \citep[see, for example,][]{murphy1985estimation}. It allows valid inference based on the conventional variance estimator if and only if we are interested in tests of $\beta=0$, which often is of major interest in applications. However, it does not justify the construction of standard errors or confidence intervals. 

The estimation of $\Omega$ is challenging as we cannot obtain reliable residual estimates for the imputed observations, $u_{1}$ and $v_{1}$ . Hence, the first, fourth and fifth term of $\Omega$ cannot be simply estimated by using the corresponding residuals. However, it is possible to circumvent this issue by using the error of the imputed model $(\widetilde{u}_{1})$ for which residuals $(\widehat{\widetilde{u}}_{1})$ can be obtained. The variance estimator given in the following proposition is consistent under general forms of heteroskedasticity. 
\begin{proposition} \label{prop:consistent_rob_var}
	Under Assumptions 2SLS \ref{ass:2sls1}-\ref{ass:2sls4}, 2SLS \ref{ass:2sls5b}-\ref{ass:2sls6} and $E\left[y_i^4\right]$, $E\left[\widetilde{x}_i^4\right]$, $E\left[\Vert Z_{i.}\Vert^4\right]$ being finite, we have that 
	\begin{align*}
		\widehat{V}_{\widehat{\beta}_{RI}}=\left(\widetilde{x}'P_Z\widetilde{x}\right)^{-1}\widetilde{x}'Z(Z'Z)^{-1}\widehat{W}_{RI}(Z'Z)^{-1}Z'\widetilde{x}\left(\widetilde{x}'P_Z\widetilde{x}\right)^{-1}
	\end{align*}
	is a consistent estimator of the asymptotic variance ($n\widehat{V}_{\widehat\beta_{RI}}\overset{p}{\to}V_{\widehat\beta_{RI}}$) with
	\begin{align*}
		\widehat{W}_{RI}=&\left(\sum_{i=1}^{n}\widehat{\widetilde{u}}_{i}^2Z_{i.}Z_{i.}'\right)-2\left(\sum_{i=1}^{n_0}\widehat{\widetilde{u}}_{i}\widehat{\widetilde{v}}_{i}Z_{i.}Z_{i.}'\right)\left(\sum_{i=1}^{n_0}Z_{i.}Z_{i.}'\right)^{-1}\left(\sum_{i=n_0+1}^{n}Z_{i.}Z_{i.}'\right)\widehat{\beta}_{RI}\\
		&+\Bigg[\left(\sum_{i=n_0+1}^{n}Z_{i.}Z_{i.}'\right)\left(\sum_{i=1}^{n_0}Z_{i.}Z_{i.}'\right)^{-1}\left(\sum_{i=1}^{n_0}\widehat{\widetilde{v}}_{i}^2Z_{i.}Z_{i.}'\right)\left(\sum_{i=1}^{n_0}Z_{i.}Z_{i.}'\right)^{-1}\left(\sum_{i=n_0+1}^{n}Z_{i.}Z_{i.}'\right)\\
		&-\sum_{i=n_0+1}^{n}Z_{i.}Z_{i.}'\left(\sum_{i=1}^{n_0}Z_{i.}Z_{i.}'\right)^{-1}
		\left(\sum_{i=1}^{n_0}\widehat{\widetilde{v}}_{i}^2Z_{i.}Z_{i.}'\right)\left(\sum_{i=1}^{n_0}Z_{i.}Z_{i.}'\right)^{-1}Z_{i.}Z_{i.}'\Bigg]\widehat{\beta}_{RI}^2
	\end{align*}
	where $\widehat{\widetilde{u}}_i=y_i-\widetilde{x}_i\widehat\beta_{RI}$ and $\widehat{\widetilde{v}}_i=\widetilde{x}_i-Z_{i.}'\widehat\pi_{CC}$	
\end{proposition}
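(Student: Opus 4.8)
The plan is to show that the sample analog $n\widehat{V}_{\widehat\beta_{RI}}$ converges in probability to $V_{\widehat\beta_{RI}}$ by mapping each piece of $\widehat W_{RI}$ (after scaling by $1/n$) onto the corresponding piece of $\Omega$ in Proposition \ref{prop:asym_var_het}. The outer factors are the easy part: by Assumptions 2SLS \ref{ass:2sls2}--\ref{ass:2sls4} and the law of large numbers, $\frac{1}{n}\widetilde{x}'P_Z\widetilde{x}\overset{p}{\to}Q_{xZ}Q_{ZZ}^{-1}Q_{Zx}$ and $\frac{1}{n}Z'\widetilde{x}\overset{p}{\to}Q_{Zx}$ — these are standard and I would only need to note that the imputation does not disturb them because $\widehat\pi_{CC}\overset{p}{\to}\pi$ (so $\widetilde x$ and $x$ have the same probability limits for these averages). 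Hence by the continuous mapping theorem it suffices to prove that $\frac{1}{n}\widehat W_{RI}\overset{p}{\to}\Omega$, and the whole result follows by Slutsky.

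For $\frac{1}{n}\widehat W_{RI}\overset{p}{\to}\Omega$ I would work term by term, using the decomposition of $\widetilde u$ and $\widetilde v$ from \eqref{eq:imp_errors}. \textbf{Term 1:} $\frac{1}{n}\sum_{i=1}^n \widehat{\widetilde u}_i^2 Z_{i.}Z_{i.}'$. Write $\widehat{\widetilde u}_i=\widetilde u_i-\widetilde x_i(\widehat\beta_{RI}-\beta)$; the cross and square terms in $(\widehat\beta_{RI}-\beta)$ vanish by consistency of $\widehat\beta_{RI}$ and the finite fourth-moment assumptions (these control $\frac1n\sum \widetilde x_i^2\|Z_{i.}\|^2$ etc. via Cauchy--Schwarz), so $\frac1n\sum \widehat{\widetilde u}_i^2 Z_{i.}Z_{i.}'=\frac1n\sum \widetilde u_i^2 Z_{i.}Z_{i.}'+o_p(1)$. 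Splitting the sum over complete ($i\le n_0$) and imputed ($i>n_0$) observations, on the complete block $\widetilde u_i=u_i$ so that block tends to $(1-p)E[u_{0i}^2Z_{0i.}Z_{0i.}']$; on the imputed block $\widetilde u_i=u_i+v_i\beta-Z_{i.}'(\widehat\pi_{CC}-\pi)\beta$, and expanding the square one recovers $pE[u_{1i}^2Z_{1i.}Z_{1i.}']+2pE[u_{1i}v_{1i}Z_{1i.}Z_{1i.}']\beta+pE[v_{1i}^2Z_{1i.}Z_{1i.}']\beta^2$ plus terms involving the average $\frac{1}{n}\sum_{i>n_0}Z_{i.}Z_{i.}'(\widehat\pi_{CC}-\pi)$; since $\widehat\pi_{CC}-\pi=(Z_0'Z_0)^{-1}Z_0'v_0=O_p(n^{-1/2})$ while $\frac1n\sum_{i>n_0}Z_{i.}Z_{i.}'=O_p(1)$, a careful bookkeeping shows the genuinely surviving contribution of the $(\widehat\pi_{CC}-\pi)$-terms is $\frac{p^2}{1-p}Q_{Z_1Z_1}Q_{Z_0Z_0}^{-1}E[v_{0i}^2Z_{0i.}Z_{0i.}']Q_{Z_0Z_0}^{-1}Q_{Z_1Z_1}\beta^2$ and $-2pE[u_{0i}v_{0i}Z_{0i.}Z_{0i.}']Q_{Z_0Z_0}^{-1}Q_{Z_1Z_1}\beta$ — these are exactly the ``extra'' terms of $\Omega$ beyond $E[u_i^2Z_{i.}Z_{i.}']$, once one uses $E[u_i^2Z_{i.}Z_{i.}']=(1-p)E[u_{0i}^2Z_{0i.}Z_{0i.}']+pE[u_{1i}^2Z_{1i.}Z_{1i.}']$ under the MAR/ignorability assumption. \textbf{Terms 2 and 3:} here the sums run only over the complete block, where $\widehat{\widetilde v}_i=\widetilde x_i-Z_{i.}'\widehat\pi_{CC}$ equals $v_i-Z_{i.}'(\widehat\pi_{CC}-\pi)=v_i+o_p(1)$ pointwise-on-average, and similarly $\widehat{\widetilde u}_i\overset{p}{\to}\widetilde u_i$; combined with $\frac{1}{n}\sum_{i\le n_0}Z_{i.}Z_{i.}'\overset{p}{\to}(1-p)Q_{Z_0Z_0}$, $\frac{1}{n}\sum_{i>n_0}Z_{i.}Z_{i.}'\overset{p}{\to}pQ_{Z_1Z_1}$, these deliver $-2pE[u_{0i}v_{0i}Z_{0i.}Z_{0i.}']Q_{Z_0Z_0}^{-1}Q_{Z_1Z_1}\widehat\beta_{RI}$ and, from the first line of the bracket, $\frac{p^2}{1-p}Q_{Z_1Z_1}Q_{Z_0Z_0}^{-1}E[v_{0i}^2Z_{0i.}Z_{0i.}']Q_{Z_0Z_0}^{-1}Q_{Z_1Z_1}\widehat\beta_{RI}^2$. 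Note these \emph{double-count} two of the terms already produced by Term 1.

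The second line of the bracket in $\widehat W_{RI}$ — the term $-\sum_{i>n_0}Z_{i.}Z_{i.}'(\sum_{i\le n_0}Z_{i.}Z_{i.}')^{-1}(\sum_{i\le n_0}\widehat{\widetilde v}_i^2 Z_{i.}Z_{i.}')(\sum_{i\le n_0}Z_{i.}Z_{i.}')^{-1}Z_{i.}Z_{i.}'$ — is precisely the correction that removes this double counting: after scaling by $1/n$ it converges to $-\frac{p^2}{1-p}Q_{Z_1Z_1}Q_{Z_0Z_0}^{-1}E[v_{0i}^2Z_{0i.}Z_{0i.}']Q_{Z_0Z_0}^{-1}Q_{Z_1Z_1}\beta^2$ minus the analogous piece for the $u_{0i}v_{0i}$ term that the imputed-block expansion in Term 1 had over-generated. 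Assembling all the limits and using consistency of $\widehat\beta_{RI}$ to replace $\widehat\beta_{RI}$ and $\widehat\beta_{RI}^2$ by $\beta$ and $\beta^2$, the surviving total is exactly $\Omega$. \textbf{The main obstacle} is this combinatorial bookkeeping: several of the terms of $\Omega$ arise from more than one place (the imputed-block expansion of Term 1, Terms 2--3, and the two lines of the bracket), with signs and factors of $p$, $1-p$ that must cancel precisely; organizing the argument so that one tracks each limiting matrix only once — ideally by first rewriting $\widehat W_{RI}$ in a ``grouped'' form that isolates $\widehat{\widetilde u}^2$, $\widehat{\widetilde u}\widehat{\widetilde v}$, and $\widehat{\widetilde v}^2$ contributions — is where the real care is needed. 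The moment conditions $E[y_i^4],E[\widetilde x_i^4],E[\|Z_{i.}\|^4]<\infty$ enter only to justify the uniform-integrability/LLN steps and to kill the $(\widehat\beta_{RI}-\beta)$ remainders via Cauchy--Schwarz, so I would state that dependence once at the outset and not belabor it.
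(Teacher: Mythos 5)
Your overall scaffolding (outer factors by the LLN, reduce to $\widehat W_{RI}/n\overset{p}{\to}\Omega$ term by term, finish with Slutsky) matches the paper's proof, but the internal bookkeeping you describe — the part you yourself flag as ``the main obstacle'' — is wrong. In your Term 1, the pieces involving $\widehat\pi_{CC}-\pi$ on the imputed block do \emph{not} survive: since $\widehat\pi_{CC}-\pi=(Z_0'Z_0)^{-1}Z_0'v_0=O_p(n^{-1/2})$ and the whole expression is scaled by $1/n$ (not by $1/\sqrt{n}$ as in the derivation of $\Omega$ in Proposition \ref{prop:asym_var_het}, which is where those pieces genuinely matter), the cross term and the quadratic term are $o_p(1)$; the paper disposes of them with the triangle and H\"older inequalities under the fourth-moment conditions. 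Hence $\frac1n\sum_i\widehat{\widetilde u}_i^2Z_{i.}Z_{i.}'$ converges only to $E[u_i^2Z_{i.}Z_{i.}']+2pE[u_{1i}v_{1i}Z_{1i.}Z_{1i.}']\beta+pE[v_{1i}^2Z_{1i.}Z_{1i.}']\beta^2$; it does \emph{not} deliver the $-2pE[u_{0i}v_{0i}Z_{0i.}Z_{0i.}']Q_{Z_0Z_0}^{-1}Q_{Z_1Z_1}\beta$ and $\frac{p^2}{1-p}$ terms of $\Omega$. Those are supplied, exactly once, by the second and third terms of $\widehat W_{RI}$ — which is the entire point of including them: the parts of $\Omega$ stemming from the imputation cannot be read off the imputed-block residuals and must be reconstructed from complete-case residuals together with $Q_{Z_0Z_0}^{-1}$ and $Q_{Z_1Z_1}$.

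Consequently there is no double counting to remove, and your reading of the last line of the bracket as a correction with a nonzero limit is also incorrect. After scaling by $1/n$ that term equals $\frac{n_1}{n}\cdot\frac{1}{n_0}\bigl(\frac{1}{n_1}\sum_{i>n_0}Z_{i.}Z_{i.}'\,A\,Z_{i.}Z_{i.}'\bigr)$ with $A=O_p(1)$, hence it is $O_p(1/n_0)=o_p(1)$; the paper states explicitly that it vanishes asymptotically and is retained only to improve finite-sample performance. Moreover, it contains only $\widehat{\widetilde v}_i^2$ and is multiplied by $\widehat\beta_{RI}^2$, so it could not in any case cancel a $u_{0i}v_{0i}\beta$ cross term. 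Your two misattributions happen to offset, so your grand total coincides with $\Omega$, but the intermediate probability limits you assert are false, and the argument as described would not survive being written out carefully.
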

\begin{proof}
	See Appendix \ref{app:consistent_rob_var}. 
\end{proof}

In the following we compare the conventional variance estimator, which ignores the imputation, with the true asymptotic variance. To simplify the comparison, we assume homoskedasticity and MCAR. The asymptotic variance is then stated in the following corollary.
\paragraph{Corollary 1.}
\textit{Under the conditions of Proposition \ref{prop:asym_var_het}, 2SLS \ref{ass:2sls5} and MCAR the asymptotic variance is given by}
\begin{align}
	\begin{split}
		V_{\widehat\beta_{RI}}^{hom}=&\left(Q_{xZ}Q_{ZZ}^{-1}Q_{Zx}\right)^{-1}\sigma_{u}^2+\left(Q_{xZ}Q_{ZZ}^{-1}Q_{Zx}\right)^{-1}\left( \frac{p}{1-p} \right)\sigma_{v}^2 \,\beta^2 \, .
	\end{split} \nonumber
\end{align}
\begin{proof}
	See Appendix \ref{app:hom_var}. 
\end{proof}

\noindent
The conventional variance estimator calculated by standard statistical software is given by
\begin{align}
	\begin{split} \label{naivevariance}
		\widehat{V}_{\widehat{\beta}_{RI}}^{conv}=\left(\widetilde{x}'P_Z\widetilde{x}\right)^{-1}\widehat{\sigma}_{\widetilde{u}}^2,\phantom{aa}with\;\widehat{\sigma}_{\widetilde{u}}^2=\frac{1}{n}\sum^{n}_{i=1}\widehat{\widetilde{u}}_i^2=\frac{1}{n}\sum^{n}_{i=1}(y_i-\widetilde{x}_i\widehat\beta_{RI})^2 \, .
	\end{split}
\end{align}
and its limit under homoskedasticity and MCAR is,
\begin{align} 
	\label{naive2SLS}
	\begin{split}
		n\widehat{V}_{\widehat{\beta}_{RI}}^{conv}\xrightarrow{p}&\left(Q_{xZ}Q_{ZZ}^{-1}Q_{Zx}\right)^{-1}\sigma_{u}^2+\left(Q_{xZ}Q_{ZZ}^{-1}Q_{Zx}\right)^{-1} p \left(2\sigma_{uv}\beta+\sigma_{v}^2\beta^2\right) \, .
	\end{split}
\end{align}
Comparing the limit of the conventional estimators with the true asymptotic variance in Corollary 1, we can see that the conventional estimator does not reflect the true variance of $\widehat\beta_{RI}$. For instance, while the asymptotic variance always increases with the missing probability, the conventional estimator could even decrease in it (if $-2\sigma_{uv}\beta>\sigma_{v}^2\beta^2$). In this case the limit of the conventional estimator is smaller than the asymptotic variance without any missing data problem, which is clearly counterintuitive. Moreover, the degree of endogeneity ($\sigma_{uv}$) erroneously affects the conventional limit while it has indeed no effect on the true asymptotic variance.


\section{Monte Carlo Simulation}\label{sec:MCsim}

We illustrate our theoretical findings in Monte Carlo simulations. To implement the missing data problem, we first mimic the model in (\ref{eq:molde}) using the data generating process described below, and then randomly delete the value $x_i$ with probability $p$. The missing structure is thus MCAR. For each of the $n$ observations $Z_{i.}$ and $v_i$ are drawn from
$$Z_{i.}\sim{}N\left(0_{L},\frac{1}{L}I_L\right) \ ; \hspace{0.5cm} v_i\sim N(0,1) \, , $$ 
where $L$ denotes the number of instruments. We follow \cite{hausman2012instrumental} and \cite{chao2014testing} and define the structural error as, 
$$u_i=\sigma_{uv}v_i+\sqrt{\frac{1-\sigma_{uv}^2}{\phi+(0.86^2)}}(\phi\epsilon_{1i}+0.86\epsilon_{2i}),\;\;\epsilon_{1i}\sim{}N(0,Z_{i.}'Z_{i.}),\;\epsilon_{2i}\sim{}N(0,0.86^2),\;\phi=5 \, ,$$
where $\phi$ defines the strength of the heteroskedasticity. We set $\beta=0.5$, $L=3$, $n=1000$, number of Monte Carlo repetitions $R=5000$, and the step size, which we use to alter the probability of missings in the different simulations, to $\Delta{}p=0.005$. As derived in the previous section, the sign of the endogeneity is crucial. Hence, we show results for $\sigma_{uv}=0.3$ and $\sigma_{uv}=-0.3$. We set $\pi=\sqrt{\frac{F\,L}{n}}$. Note that we divide by $n$ and not by $n_0$. Therefore, $F$ defines the first-stage $F$-statistic in the entire sample containing both complete and incomplete observations, and the first-stage $F$-statistic in the complete case sample gradually decreases with the missing probability. We set $F=100$. That is, the complete case $F$-statistic is around $100$ at $p=1$ and around $20$ at $p=0.8$.

\begin{figure}[!h]	
	\includegraphics[width=\columnwidth,trim={2cm 10.5cm 2cm 10.5cm}]{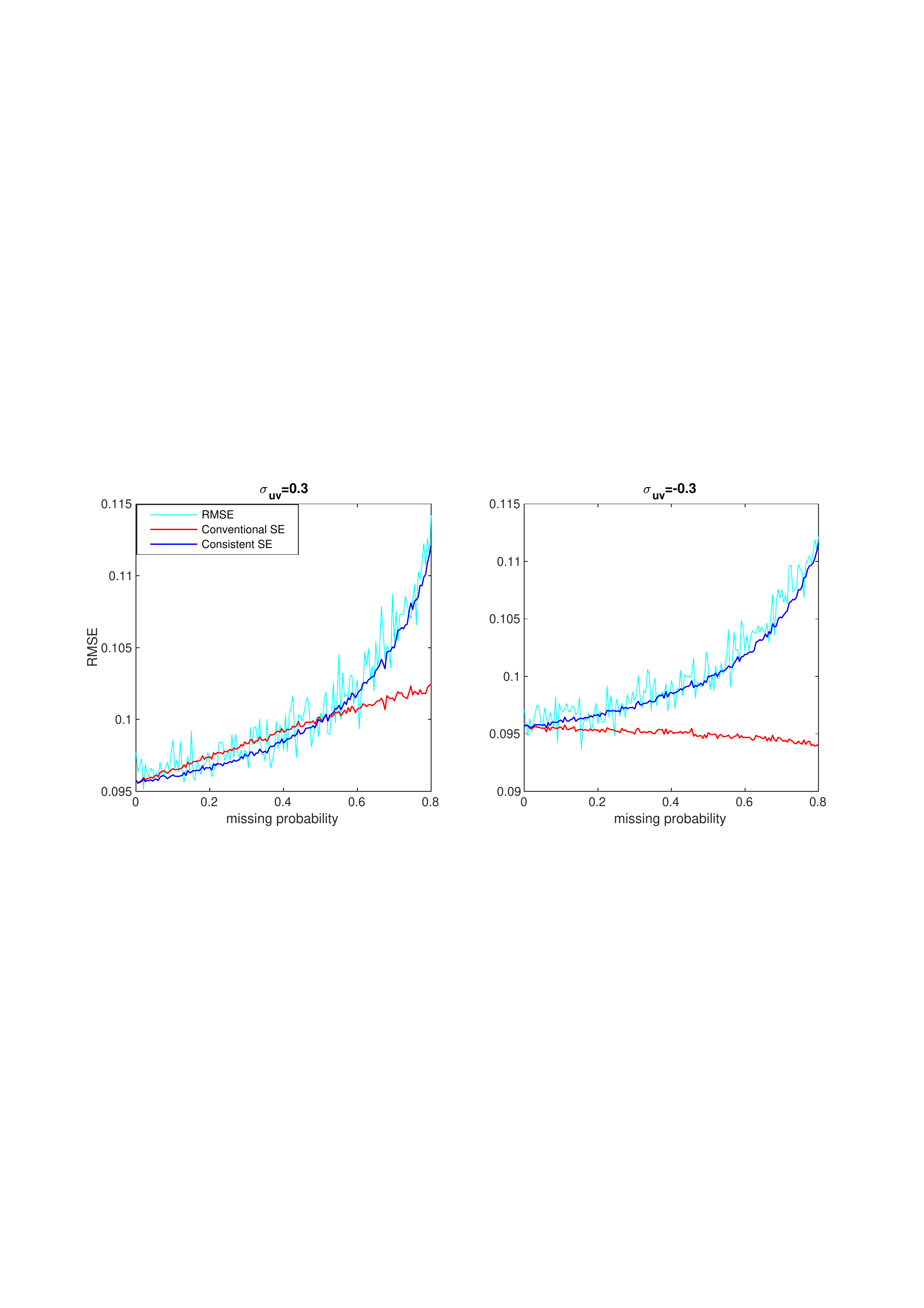}
	\caption{Consistent vs conventional variance estimator (RMSE)} \label{fig:cons_v_naive_rmse}
\end{figure}

\begin{figure}[h!]
	\includegraphics[width=\columnwidth,trim={2cm 10.5cm 2cm 10.5cm}]{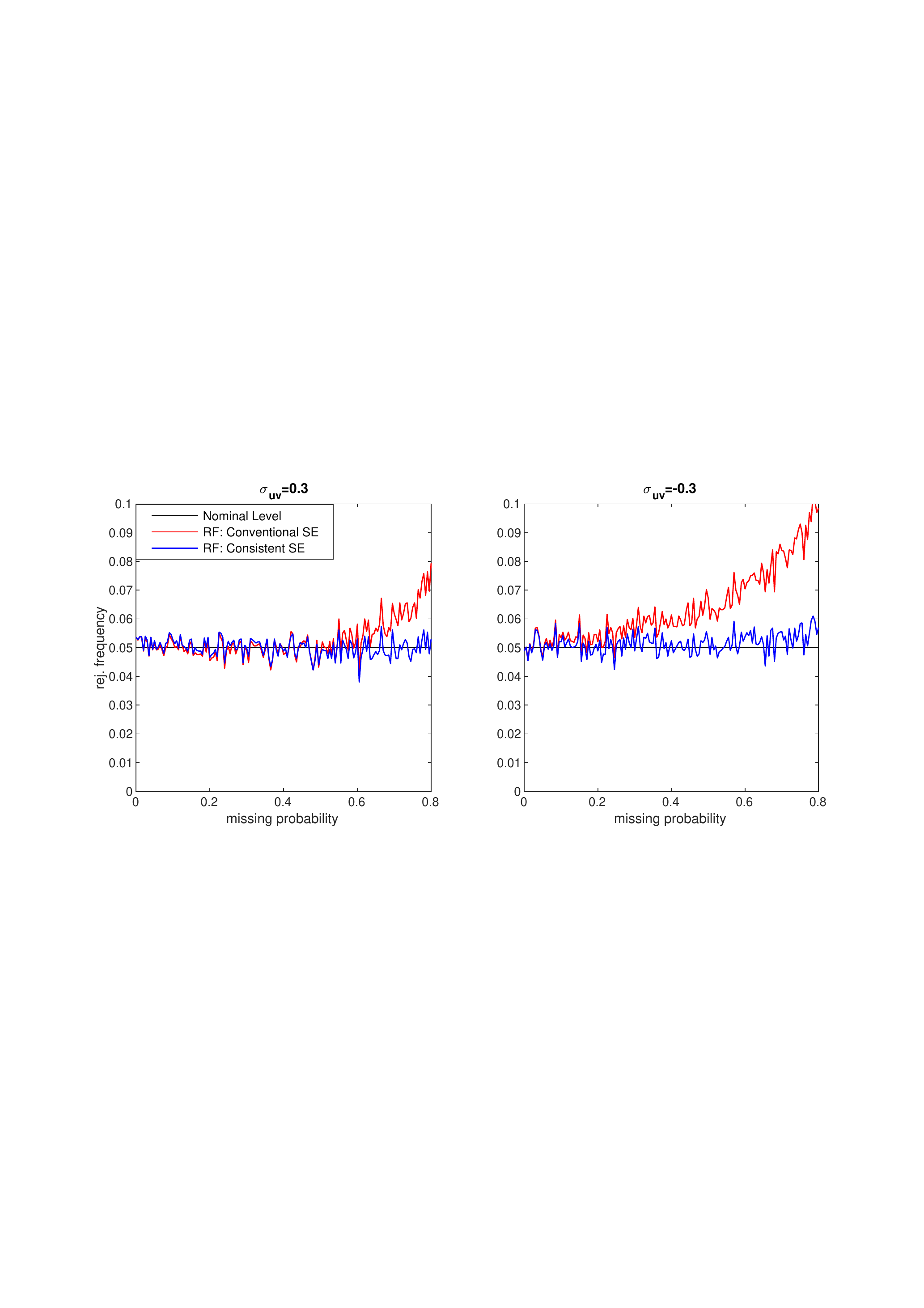}
	\caption{Consistent vs conventional variance estimator (Rejection Frequency)} \label{fig:cons_v_naive_rej}
\end{figure}

Figure \ref{fig:cons_v_naive_rmse} compares our consistent and the conventional standard errors to the observed root mean squared error (RMSE) obtained from the Monte Carlo simulations. It shows that the conventional estimator cannot properly describe the standard error of the 2SLS estimator with regression imputation. The missing probability has a linear effect on the conventional estimator (see eq. \ref{naive2SLS}) while it nonlinearly affects the true asymptotic variance as shown in Corollary 1. Moreover, as expected the conventional standard error can even decrease as the missing probability increases ($\vert\sigma_{uv}\beta\vert>\sigma^2_v \beta^2$ in the right graph of Figure \ref{fig:cons_v_naive_rmse}).   The reason why \cite{MCDONOUGH2017}, who did not derive a consistent variance estimator for regression imputation, have acceptable results in their simulations is solely due to their chosen parameters ($\sigma_{uv}>0$, $\beta>0$, low missing probability) and cannot be generalized to the entire parameter space. Our variance estimator ($\widehat{V}_{\widehat\beta_{RI}}$) performs well in both settings. Additionally, Figure \ref{fig:cons_v_naive_rej} shows how often the null hypothesis ($H_0:\,\beta=0.5$) is rejected at the $5\%$ nominal level. Again, our proposed variance estimator performs better than the conventional one---particularly, if $\sigma_{uv}<0$.


\section{Conclusion}\label{sec:conclusion}

We investigate how two issues, which are likely present in many empirical studies, affect the estimation of causal effects, namely the endogeneity of regressors and missing data. If researchers use an instrumental variables regression after single imputation of missing values for an endogenous regressor, they have to be aware that conventional methods to estimate the variance fail to account for the imputation. The asymptotic variance of 2SLS is affected by the imputation implying that conventional methods cannot be used to construct standard errors, confidence intervals and conduct tests. We derive a heteroskedastic variance estimator which takes the imputation into account and is consistent. Monte Carlo simulations show that our estimator performs well while the conventional variance estimator does not.


\bigskip
\bibliographystyle{jaestyle}
\bibliography{FK2019_bib.bib}

\bigskip
\renewcommand{\thesection}{A}
\pagebreak

\section{Appendix}


\subsection{Proof of Proposition \ref{prop:asym_var_het}}
\label{app:asym_var_het}

Starting with 
\begin{align}
	\begin{split}
		\sqrt{n}(\widehat{\beta}_{RI}-\beta)=&\left(\left(\frac{\widetilde{x}'Z}{n}\right)\left(\frac{Z'Z}{n}\right)^{-1}\left(\frac{Z'\widetilde{x}}{n}\right)\right)^{-1}\\
		&\left(\left(\frac{\widetilde{x}'Z}{n}\right)\left(\frac{Z'Z}{n}\right)^{-1}\left(\frac{Z'\widetilde{u}}{\sqrt{n}}\right)\right) \, .
	\end{split}
\end{align}
From 2SLS \ref{ass:2sls2} and 2SLS \ref{ass:2sls3}, we have
$\frac{\widetilde{x}'P_Z\widetilde{x}}{n}\xrightarrow[]{p}{}Q_{xZ}Q_{ZZ}^{-1}Q_{Zx}$, $\frac{\widetilde{x}'Z}{n}\xrightarrow[]{p}{}Q_{xZ}$ and $\frac{Z'Z}{n}\xrightarrow[]{p}{}Q_{ZZ}$. The same holds for the complete and incomplete part of the sample due to 2SLS \ref{ass:2sls5b}, e.g., $\frac{1}{n_0}\sum_{i=1}^{n_0}Z_{i.}Z_{i.}'=Z_0'Z_0/n_0\xrightarrow[]{p}{}Q_{Z_0Z_0}$. Hence, it remains to show how $\frac{Z'\widetilde{u}}{\sqrt{n}}$ behaves as $n\xrightarrow[]{p}\infty$.
\begin{align}
	\begin{split}
		\frac{Z'\widetilde{u}}{\sqrt{n}}=&\frac{Z_0'u_0}{\sqrt{n}}+\frac{Z_1'\widetilde{u}_1}{\sqrt{n}}\\
		=&\frac{Z_0'u_0}{\sqrt{n}}+\frac{Z_1'u_1}{\sqrt{n}}+\frac{Z_1'v_1}{\sqrt{n}}\beta-\frac{n_1}{n_0}\frac{Z_1'Z_1}{n_1}\left(\frac{Z_0'Z_0}{n_0}\right)^{-1}\frac{Z_0'v_0}{\sqrt{n}}\beta\\
		=&\frac{\sqrt{n_0}}{\sqrt{n}}\left(\frac{Z_0'u_0}{\sqrt{n_0}}-\frac{n_1}{n_0}\frac{Z_1'Z_1}{n_1}\left(\frac{Z_0'Z_0}{n_0}\right)^{-1}\frac{Z_0'v_0}{\sqrt{n_0}}\beta\right)+\frac{\sqrt{n_1}}{\sqrt{n}}\left(\frac{Z_1'u_1}{\sqrt{n_1}}+\frac{Z_1'v_1}{\sqrt{n_1}}\beta\right)\\
		=&\frac{\sqrt{n_0}}{\sqrt{n}}\left(\frac{1}{\sqrt{n_0}}\sum_{i=1}^{n_0}\left(Z_{0i.}u_{0i}-\frac{n_1}{n_0}\frac{Z_1'Z_1}{n_1}\left(\frac{Z_0'Z_0}{n_0}\right)^{-1}Z_{0i.}v_{0i}\beta\right)\right)\\
		&+\frac{\sqrt{n_1}}{\sqrt{n}}\left(\frac{1}{\sqrt{n_1}}\sum_{i=n_0+1}^{n}Z_{1i.}(u_{1i}+v_{1i}\beta)\right)
	\end{split}
\end{align}
Both terms are asymptotically independent due to 2SLS \ref{ass:2sls4}, and we get from the CLT
$$\frac{\sqrt{n_0}}{\sqrt{n}}\left(\frac{1}{\sqrt{n_0}}\sum_{i=1}^{n_0}\left(Z_{0i.}u_{0i}-\frac{n_1}{n_0}\frac{Z_1'Z_1}{n_1}\left(\frac{Z_0'Z_0}{n_0}\right)^{-1}Z_{0i.}v_{0i}\beta\right)\right)\xrightarrow[]{d}{}N(0_{L},\Omega_0)$$
and
$$\frac{\sqrt{n_1}}{\sqrt{n}}\left(\frac{1}{\sqrt{n_1}}\sum_{i=n_0+1}^{n}Z_{1i.}(u_{1i}+v_{1i}\beta)\right)\xrightarrow[]{d}{}N(0_{L},\Omega_1)$$
The asymptotic variances $\Omega_0$ and $\Omega_1$ are given by,
\begin{align}
	\begin{split}
		\nonumber
		\Omega_0=&(1-p)E[Z_{0i.}Z_{0i.}'u_{0i}^2]-2pE[u_{0i}v_{0i}Z_{0i.}Z_{0i.}']Q_{Z_0Z_0}^{-1}Q_{Z_1Z_1}\beta\\
		&+\frac{p^2}{1-p}Q_{Z_1Z_1}Q_{Z_0Z_0}^{-1}E[v_{0i}^{2}Z_{0i.}Z_{0i.}']Q_{Z_0Z_0}^{-1}Q_{Z_1Z_1}\beta^2
	\end{split}
\end{align}
and,
\begin{align}
	\begin{split}
		\nonumber
		\Omega_1=&p\left(E[Z_{1i.}Z_{1i.}'u_{1i}^2]+2E[u_{1i}v_{1i}Z_{1i.}Z_{1i.}']\beta+E[Z_{1i.}Z_{1i.}'v_{1i}^2]\beta^2\right)
	\end{split}
\end{align}
Combing both gives the limiting distribution of $\frac{Z'\widetilde{u}}{\sqrt{n}}$,
\begin{align}
	\begin{split}
		\frac{Z'\widetilde{u}}{\sqrt{n}}&\xrightarrow[]{p}{}N(0_{L},\Omega)\\
		where\;\Omega=&E[Z_{i.}Z_{i.}'u_{i}^2]-2pE[u_{0i}v_{0i}Z_{0i.}Z_{0i.}']Q_{Z_0Z_0}^{-1}Q_{Z_1Z_1}\beta\\
		&+\frac{p^2}{1-p}Q_{Z_1Z_1}Q_{Z_0Z_0}^{-1}E[v_{0i}^{2}Z_{0i.}Z_{0i.}']Q_{Z_0Z_0}^{-1}Q_{Z_1Z_1}\beta^2\\
		&+2pE[u_{1i}v_{1i}Z_{1i.}Z_{1i.}']\beta+pE[Z_{1i.}Z_{1i.}'v_{1i}^2]\beta^2 \, ,
	\end{split}
\end{align}
where we used $(1-p)E[Z_{0i.}Z_{0i.}'u_{0i}^2]+pE[Z_{1i.}Z_{1i.}'u_{1i}^2]=E[Z_{i.}Z_{i.}'u_{i}^2]$. Using Slutsky's lemma, we can then prove Proposition \ref{prop:asym_var_het}
\begin{align}
	\begin{split}
		\sqrt{n}(\widehat{\beta}_{RI}-\beta)&\xrightarrow[]{d}\left(Q_{xZ}Q_{ZZ}^{-1}Q_{Zx}\right)^{-1}Q_{xZ}Q_{ZZ}^{-1}N(0_{L},\Omega)=N(0,V_{\widehat\beta_{RI}})\\
		where\;&V_{\widehat\beta_{RI}}=\left(Q_{xZ}Q_{ZZ}^{-1}Q_{Zx}\right)^{-1}Q_{xZ}Q_{ZZ}^{-1}\,\Omega{}\,Q_{ZZ}^{-1}Q_{Zx}\left(Q_{xZ}Q_{ZZ}^{-1}Q_{Zx}\right)^{-1} \, .
	\end{split}
\end{align}


\subsection{Proof of Proposition \ref{prop:consistent_rob_var}}
\label{app:consistent_rob_var}
Starting with
\begin{align}
	\begin{split}
		n\widehat{V}_{\widehat{\beta}_{RI}}=&\left(\frac{\widetilde{x}'P_Z\widetilde{x}}{n}\right)^{-1}\frac{\widetilde{x}'Z}{n}\left(\frac{Z'Z}{n}\right)^{-1}\frac{\widehat{W}_{RI}}{n}\left(\frac{Z'Z}{n}\right)^{-1}\frac{Z'\widetilde{x}}{n}\left(\frac{\widetilde{x}'P_Z\widetilde{x}}{n}\right)^{-1} \, .
	\end{split}
\end{align}
From 2SLS \ref{ass:2sls2} and 2SLS \ref{ass:2sls3}, we have
$\frac{\widetilde{x}'P_Z\widetilde{x}}{n}\xrightarrow[]{p}{}Q_{xZ}Q_{ZZ}^{-1}Q_{Zx}$, $\frac{\widetilde{x}'Z}{n}\xrightarrow[]{p}{}Q_{xZ}$ and $\frac{Z'Z}{n}\xrightarrow[]{p}{}Q_{ZZ}$. The same holds for the complete and incomplete part of the sample due to 2SLS \ref{ass:2sls5b}, e.g., $\frac{1}{n_0}\sum_{i=1}^{n_0}Z_{i.}Z_{i.}'=Z_0'Z_0/n_0\xrightarrow[]{p}{}Q_{Z_0Z_0}$. Hence, it remains to show that $\frac{\widehat{W}_{RI}}{n}=\widehat{\Omega}\xrightarrow[]{p}\Omega$ as $n\xrightarrow[]{p}\infty$.
\begin{align} \label{eq:rob_var_est}
	\small
	\begin{split}
		\widehat{\Omega}=&\left(\frac{1}{n}\sum_{i=1}^{n}\widehat{\widetilde{u}}_{i}^2Z_{i.}Z_{i.}'\right)-2\frac{n_1}{n}\left(\frac{1}{n_0}\sum_{i=1}^{n_0}\widehat{\widetilde{u}}_{i}\widehat{\widetilde{v}}_{i}Z_{i.}Z_{i.}'\right)\left(\frac{1}{n_0}\sum_{i=1}^{n_0}Z_{i.}Z_{i.}'\right)^{-1}\left(\frac{1}{n_1}\sum_{i=n_0+1}^{n}Z_{i.}Z_{i.}'\right)\widehat{\beta}_{RI}\\
		&+\frac{n_1^2}{nn_0}\left(\frac{1}{n_1}\sum_{i=n_0+1}^{n}Z_{i.}Z_{i.}'\right)\left(\frac{1}{n_0}\sum_{i=1}^{n_0}Z_{i.}Z_{i.}'\right)^{-1}\left(\frac{1}{n_0}\sum_{i=1}^{n_0}\widehat{\widetilde{v}}_{i}^2Z_{i.}Z_{i.}'\right)\left(\frac{1}{n_0}\sum_{i=1}^{n_0}Z_{i.}Z_{i.}'\right)^{-1}\left(\frac{1}{n_1}\sum_{i=n_0+1}^{n}Z_{i.}Z_{i.}'\right)\widehat{\beta}_{RI}^2\\
		&-\frac{n_1}{n}\left(\frac{1}{n_0}\right)\left(\frac{1}{n_1}\sum_{i=n_0+1}^{n}Z_{i.}Z_{i.}'Z_{i.}Z_{i.}'\right)\left(\frac{1}{n_0}\sum_{i=1}^{n_0}Z_{i.}Z_{i.}'\right)^{-1}\left(\frac{1}{n_0}\sum_{i=1}^{n_0}\widehat{\widetilde{v}}_{i}^2Z_{i.}Z_{i.}'\right)\left(\frac{1}{n_0}\sum_{i=1}^{n_0}Z_{i.}Z_{i.}'\right)^{-1}\widehat{\beta}_{RI}^2\\
		where&\;\widehat{\widetilde{u}}_i=y_i-\widetilde{x}_i\widehat\beta_{RI}\;and\; \widehat{\widetilde{v}}_i=\widetilde{x}_i-Z_{i.}'\widehat\pi_{CC}  \, .
	\end{split}
\end{align}
In the next steps, it is useful to rewrite the composite residuals as $\widehat{\widetilde{u}}_{i}=\widetilde{u}_i-\widetilde{x}_i(\widehat{\beta}_{RI}-\beta)$ and $\widehat{\widetilde{v}}_{i}=\widetilde{v}_i-Z_{i.}'(\widehat{\pi}_{CC}-\pi)$. Rewritting the first term of eq. (\ref{eq:rob_var_est}), we get
\begin{align}\label{eq:rob_var_res}
	\small
	\begin{split}
		\frac{1}{n}\sum_{i=1}^{n}\widehat{\widetilde{u}}_{i}^2Z_{i.}Z_{i.}'=&\left(\frac{1}{n}\sum_{i=1}^{n}(\widetilde{u}_i-\widetilde{x}_i(\widehat{\beta}_{RI}-\beta))^2Z_{i.}Z_{i.}'\right)\\
		=&\left(\frac{1}{n}\sum_{i=1}^{n}\widetilde{u}_i^2Z_{i.}Z_{i.}'\right)-2\left(\frac{1}{n}\sum_{i=1}^{n}\widetilde{u}_i\widetilde{x}_i(\widehat{\beta}_{RI}-\beta)Z_{i.}Z_{i.}'\right)+\left(\frac{1}{n}\sum_{i=1}^{n}\widetilde{x}_i^2(\widehat{\beta}_{RI}-\beta)^2Z_{i.}Z_{i.}'\right)\\
		=&\left(\frac{1}{n}\sum_{i=1}^{n}\widetilde{u}_i^2Z_{i.}Z_{i.}'\right)-2(\widehat{\beta}_{RI}-\beta)\left(\frac{1}{n}\sum_{i=1}^{n}\widetilde{u}_i\widetilde{x}_iZ_{i.}Z_{i.}'\right)+(\widehat{\beta}_{RI}-\beta)^2\left(\frac{1}{n}\sum_{i=1}^{n}\widetilde{x}_i^2Z_{i.}Z_{i.}'\right)
	\end{split}
\end{align}
For the first term of eq. (\ref{eq:rob_var_res}), we have
\begin{align}\label{first_term}
	\small
	\begin{split}
		\frac{1}{n}\sum_{i=1}^{n}\widetilde{u}_i^2Z_{i.}Z_{i.}'=&\frac{n_0}{n}\left(\frac{1}{n_0}\sum_{i=1}^{n_0}u_i^2Z_{i.}Z_{i.}'\right)+\frac{n_1}{n}\left(\frac{1}{n_1}\sum_{i=n_0+1}^{n}\widetilde{u}_i^2Z_{i.}Z_{i.}'\right)\\
		=&\frac{n_0}{n}\left(\frac{1}{n_0}\sum_{i=1}^{n_0}u_i^2Z_{i.}Z_{i.}'\right)+\frac{n_1}{n}\left(\frac{1}{n_1}\sum_{i=n_0+1}^{n}u_i^2Z_{i.}Z_{i.}'\right)\\
		&+2\frac{n_1}{n}\left(\frac{1}{n_1}\sum_{i=n_0+1}^{n}v_iu_iZ_{i.}Z_{i.}'\right)\beta+\left(\frac{1}{n_1}\sum_{i=n_0+1}^{n}v_i^2Z_{i.}Z_{i.}'\right)\beta^2\\
		&-2\frac{n_1}{n}\left(\frac{1}{n_1}\sum_{i=n_0+1}^{n}\widetilde{u}_iZ_{i.}'(Z_0'Z_0)^{-1}Z_0'v_0Z_{i.}Z_{i.}'\right)\beta\\
		&-\frac{n_1}{n}\left(\frac{1}{n_1}\sum_{i=n_0+1}^{n}Z_{i.}Z_{i.}'(Z_0'Z_0)^{-1}Z_0'v_0v_0'Z_0(Z_0'Z_0)^{-1}Z_{i.}Z_{i.}'\right)\beta^2\\
		\xrightarrow[]{p}&E[u_{i}^2Z_{i.}Z_{i.}']+2pE[u_{1i}v_{1i}Z_{1i.}Z_{1i.}']\beta+pE[v_{1i}^2Z_{1i.}Z_{1i.}']\beta^2
	\end{split}
\end{align}
Using $(Z_0'Z_0)^{-1}Z_0'v_0=\widehat{\pi}_{CC}-\pi\, $, the triangle inequality, Hölder's inequality, we get for the next to last term of eq. (\ref{first_term})
\begin{align}
	\begin{split}
		\nonumber
		\left\Vert\frac{1}{n_1}\sum_{i=n_0+1}^{n}Z_{i.}Z_{i.}'\widetilde{u}_iZ_{i.}'(\widehat{\pi}_{CC}-\pi)\right\Vert&\leq\frac{1}{n_1}\sum_{i=n_0+1}^{n}\left\Vert{}Z_{i.}Z_{i.}'\widetilde{u}_iZ_{i.}'(\widehat{\pi}_{CC}-\pi)\right\Vert\\
		&\leq\frac{1}{n_1}\sum_{i=n_0+1}^{n}\Vert{}Z_{i.}\Vert{}^3|\widetilde{u}_i|\Vert\widehat{\pi}_{CC}-\pi\Vert=o_p(1) \, ,
	\end{split}
\end{align}
as $\Vert\widehat{\pi}_{CC}-\pi\Vert\xrightarrow[]{p}0$ and $E[\Vert{}Z_{i.}\Vert{}^3|\widetilde{u}_i|]\leq{}E[\Vert{}Z_{i.}\Vert{}^4]^{\frac{3}{4}}E[|\widetilde{u}_i|^4]^{\frac{1}{4}}<\infty$ from $E[\Vert{}Z_{i.}\Vert{}^4]$, $E[y_i^4]$ and $E[\widetilde{x}_i^4]$ being finite. Similarly, we have for the last term of eq. (\ref{first_term})
\begin{align}
	\begin{split}
		\nonumber
		\left\Vert\frac{1}{n_1}\sum_{i=n_0+1}^{n}Z_{i.}Z_{i.}'Z_{i.}Z_{i.}'(\widehat{\pi}_{CC}-\pi)(\widehat{\pi}_{CC}-\pi)'\right\Vert&\leq\frac{1}{n_1}\sum_{i=n_0+1}^{n}\left\Vert{}Z_{i.}Z_{i.}'Z_{i.}Z_{i.}'(\widehat{\pi}_{CC}-\pi)(\widehat{\pi}_{CC}-\pi)'\right\Vert\\
		&\leq\frac{1}{n_1}\sum_{i=n_0+1}^{n}\Vert{}Z_{i.}\Vert{}^4\Vert\widehat{\pi}_{CC}-\pi\Vert^2=o_p(1) \, .
	\end{split}
\end{align}

\noindent
Now, we show that the last two terms of equation (\ref{eq:rob_var_res}) converge towards zero as $n\xrightarrow[]{p}\infty$. 
\begin{align}
	\small
	\begin{split}
		\nonumber
		\left\Vert(\widehat{\beta}_{RI}-\beta)\frac{1}{n}\sum_{i=1}^{n}\widetilde{u}_i\widetilde{x}_iZ_{i.}Z_{i.}'\right\Vert&\leq|\widehat{\beta}_{RI}-\beta|\frac{1}{n_1}\sum_{i=n_0+1}^{n}\left\Vert{}\widetilde{u}_i\widetilde{x}_iZ_{i.}Z_{i.}'\right\Vert\\
		&\leq|\widehat{\beta}_{RI}-\beta|\frac{1}{n_1}\sum_{i=n_0+1}^{n}|\widetilde{u}_i||\widetilde{x}_i|\Vert{}Z_{i.}\Vert{}^2=o_p(1) \, ,
	\end{split}
\end{align}
and
\begin{align}
	\small
	\begin{split}
		\nonumber
		\left\Vert(\widehat{\beta}_{RI}-\beta)^{2}\frac{1}{n}\sum_{i=1}^{n}\widetilde{x}_i^2Z_{i.}Z_{i.}'\right\Vert&\leq|\widehat{\beta}_{RI}-\beta|^2\frac{1}{n_1}\sum_{i=n_0+1}^{n}\left\Vert{}\widetilde{x}_i^2Z_{i.}Z_{i.}'\right\Vert\\
		&\leq|\widehat{\beta}_{RI}-\beta|^2\frac{1}{n_1}\sum_{i=n_0+1}^{n}|\widetilde{x}_i|^2\Vert{}Z_{i.}\Vert{}^2=o_p(1) \, ,
	\end{split}
\end{align}
since $|\widehat{\beta}_{RI}-\beta|\xrightarrow[]{p}0 \, $, $E[|\widetilde{u}_i||\widetilde{x}_i|\Vert{}Z_{i.}\Vert{}^2]\leq{}E[|\widetilde{u}_i|^{4}]^{\frac{1}{4}}E[|\widetilde{x}_i|^{4}]^{\frac{1}{4}}E[\Vert{}Z_{i.}\Vert{}^4]^{\frac{2}{4}}<\infty$, and $E[|\widetilde{x}_i|^2\Vert{}Z_{i.}\Vert{}^2]\leq{}E[|\widetilde{x}_i|^{4}]^{\frac{1}{2}}E[\Vert{}Z_{i.}\Vert{}^4]^{\frac{1}{2}}<\infty$. 

\noindent
This proves that,
\begin{align}
	\small
	\begin{split}
		\frac{1}{n}\sum_{i=1}^{n}\widehat{\widetilde{u}}_{i}^2Z_{i.}Z_{i.}'\xrightarrow[]{p}{}E[u_{i}^2Z_{i.}Z_{i.}']+2pE[u_{1i}v_{1i}Z_{1i.}Z_{1i.}']\beta+pE[v_{1i}^2Z_{1i.}Z_{1i.}']\beta^2
	\end{split}
\end{align}
Now, analyzing the remaining terms of eq. (\ref{eq:rob_var_est}), we get
\begin{align}
	\small
	\begin{split}
		\frac{1}{n_0}\sum_{i=1}^{n_0}\widehat{\widetilde{u}}_{i}\widehat{\widetilde{v}}_{i}Z_{i.}Z_{i.}'=&\frac{1}{n_0}\sum_{i=1}^{n_0}(\widetilde{u}_i-\widetilde{x}_i(\widehat{\beta}_{RI}-\beta))(\widetilde{v}_i-Z_{i.}'(\widehat{\pi}_{CC}-\pi))Z_{i.}Z_{i.}'\\
		=&\frac{1}{n_0}\sum_{i=1}^{n_0}\widetilde{u}_{i}\widetilde{v}_{i}Z_{i.}Z_{i.}'+\frac{1}{n_0}\sum_{i=1}^{n_0}\widetilde{x}_i(\widehat{\beta}_{RI}-\beta)\widetilde{v}_{i}Z_{i.}Z_{i.}'-\frac{1}{n_0}\sum_{i=1}^{n_0}\widetilde{u}_{i}Z_{i.}'(\widehat{\pi}_{CC}-\pi)Z_{i.}Z_{i.}'\\
		&+\frac{1}{n_0}\sum_{i=1}^{n_0}\widetilde{x}_i(\widehat{\beta}_{RI}-\beta)Z_{i.}'(\widehat{\pi}_{CC}-\pi)Z_{i.}Z_{i.}'\\
		\xrightarrow[]{p}&E[u_{0i}v_{0i}Z_{0i.}Z_{0i.}'] \, ,
	\end{split}
\end{align}
and 
\begin{align}
	\small
	\begin{split}
		\frac{1}{n_0}\sum_{i=1}^{n_0}\widehat{\widetilde{v}}_{i}^2Z_{i.}Z_{i.}'=&\frac{1}{n_0}\sum_{i=1}^{n_0}\widetilde{v}_i-Z_{i.}'(\widehat{\pi}_{CC}-\pi))^2Z_{i.}Z_{i.}'\\
		=&\frac{1}{n_0}\sum_{i=1}^{n_0}\widetilde{v}_i^2Z_{i.}Z_{i.}'-2\frac{1}{n_0}\sum_{i=1}^{n_0}(\widetilde{v}_iZ_{i.}'(\widehat{\pi}_{CC}-\pi)Z_{i.}Z_{i.}'+\frac{1}{n_0}\sum_{i=1}^{n_0}(\widehat{\pi}_{CC}-\pi)'Z_{i.}Z_{i.}'(\widehat{\pi}_{CC}-\pi)Z_{i.}Z_{i.}'\\
		\xrightarrow[]{p}{}&E[v_{0i}^2Z_{0i.}Z_{0i.}'].
	\end{split}
\end{align}
The proofs are similar to the results above and omitted here.

\noindent
Finally, we can show that
\begin{align}
	\small
	\begin{split}
		\widehat{\Omega}=&\left(\frac{1}{n}\sum_{i=1}^{n}\widehat{\widetilde{u}}_{i}^2Z_{i.}Z_{i.}'\right)-2\frac{n_1}{n}\left(\frac{1}{n_0}\sum_{i=1}^{n_0}\widehat{\widetilde{u}}_{i}\widehat{\widetilde{v}}_{i}Z_{i.}Z_{i.}'\right)\left(\frac{1}{n_0}\sum_{i=1}^{n_0}Z_{i.}Z_{i.}'\right)^{-1}\left(\frac{1}{n_1}\sum_{i=n_0+1}^{n}Z_{i.}Z_{i.}'\right)\widehat{\beta}_{RI}\\
		&+\frac{n_1^2}{nn_0}\left(\frac{1}{n_1}\sum_{i=n_0+1}^{n}Z_{i.}Z_{i.}'\right)\left(\frac{1}{n_0}\sum_{i=1}^{n_0}Z_{i.}Z_{i.}'\right)^{-1}\left(\frac{1}{n_0}\sum_{i=1}^{n_0}\widehat{\widetilde{v}}_{i}^2Z_{i.}Z_{i.}'\right)\left(\frac{1}{n_0}\sum_{i=1}^{n_0}Z_{i.}Z_{i.}'\right)^{-1}\left(\frac{1}{n_1}\sum_{i=n_0+1}^{n}Z_{i.}Z_{i.}'\right)\widehat{\beta}_{RI}^2\\
		&-\frac{n_1}{n}\left(\frac{1}{n_0}\right)\left(\frac{1}{n_1}\sum_{i=n_0+1}^{n}Z_{i.}Z_{i.}'Z_{i.}Z_{i.}'\right)\left(\frac{1}{n_0}\sum_{i=1}^{n_0}Z_{i.}Z_{i.}'\right)^{-1}\left(\frac{1}{n_0}\sum_{i=1}^{n_0}\widehat{\widetilde{v}}_{i}^2Z_{i.}Z_{i.}'\right)\left(\frac{1}{n_0}\sum_{i=1}^{n_0}Z_{i.}Z_{i.}'\right)^{-1}\widehat{\beta}_{RI}^2\\
		\xrightarrow[]{p}&E[Z_{i.}Z_{i.}'u_{i}^2]-2pE[u_{0i}v_{0i}Z_{0i.}Z_{0i.}']Q_{Z_0Z_0}^{-1}Q_{Z_1Z_1}\beta\\
		&+\frac{p^2}{1-p}Q_{Z_1Z_1}Q_{Z_0Z_0}^{-1}E[v_{0i}^{2}Z_{0i.}Z_{0i.}']Q_{Z_0Z_0}^{-1}Q_{Z_1Z_1}\beta^2\\
		&+2pE[u_{1i}v_{1i}Z_{1i.}Z_{1i.}']\beta+pE[Z_{1i.}Z_{1i.}'v_{1i}^2]\beta^2=\Omega \, .
	\end{split}
\end{align}
The last term of the estimator converges toward zero due to the $\frac{1}{n_0}$ term. Nevertheless, we keep it in the estimator to improve its finite sample performance as this term is essentially part of the first and third term and hence should be deducted once. The limit is not affected by this choice.

\bigskip


\subsection{Proof of Corollary 1}
\label{app:hom_var}

Using the results from Proposition \ref{prop:asym_var_het} and the simplifications induced by homoskedasticity, e.g., $E[u_i^2Z_{i.}Z_{i.}']=Q_{ZZ}\sigma_{u}^2$, we get
\begin{align}
	\small
	\begin{split}
		\Omega^{hom}=&Q_{ZZ}\sigma_{u}^2-2pQ_{Z_1Z_1}\sigma_{uv}^2\beta+2pQ_{Z_1Z_1}\sigma_{uv}^2\beta+pQ_{Z_1Z_1}\sigma_{v}^2\beta^2+p\frac{p}{1-p}Q_{Z_1Z_1}Q_{Z_0Z_0}^{-1}Q_{Z_1Z_1}\sigma_{v}^2\beta^2\\
		=&Q_{ZZ}\sigma_{u}^2+pQ_{Z_1Z_1}\sigma_{v}^2\beta^2+p\frac{p}{1-p}Q_{Z_1Z_1}Q_{Z_0Z_0}^{-1}Q_{Z_1Z_1}\sigma_{v}^2\beta^2\\
		=&Q_{ZZ}\sigma_{u}^2-Q_{ZZ}\sigma_{v}^2\beta^2+\frac{1}{1-p}Q_{ZZ}Q_{Z_0Z_0}^{-1}Q_{ZZ}\sigma_{v}^2\beta^2 \, ,
	\end{split}
\end{align}
where we use $Q_{ZZ}=pQ_{Z_1Z_1}+(1-p)Q_{Z_0Z_0}$ in the last equality. The asymptotic variance is
\begin{align}
	\begin{split}
		V_{\widehat\beta_{RI}}^{hom}=&\left(Q_{xZ}Q_{ZZ}^{-1}Q_{Zx}\right)^{-1}\sigma_{u}^2-\left(Q_{xZ}Q_{ZZ}^{-1}Q_{Zx}\right)^{-1}\sigma_{v}^2\beta^2\\
		&+\frac{1}{1-p}\left(Q_{xZ}Q_{ZZ}^{-1}Q_{Zx}\right)^{-1}\left(Q_{xZ}Q_{Z_0Z_0}^{-1}Q_{Zx}{}\right)\\
		&\phantom{+}\left(Q_{xZ}Q_{ZZ}^{-1}Q_{Zx}\right)^{-1}\sigma_{v}^2\beta^2 \, .
	\end{split} \nonumber
\end{align}
Under MCAR, we know $Q_{Z_0Z_0}=Q_{ZZ}$, which we can use to derive the asymptotic variance of $\widehat\beta_{RI}$ under homoskedasticity and MCAR.

\end{document}